%
%
%

\documentclass{svproc}
%
%

\usepackage{url}
\usepackage{hyperref}
\hypersetup{
    colorlinks=true,
    linkcolor=blue,
    filecolor=magenta,      
    urlcolor=cyan,
}
\usepackage{graphicx}

\usepackage{amsmath}
\usepackage{amsfonts}
\usepackage{booktabs}
\usepackage[ruled,linesnumbered]{algorithm2e}
\usepackage{tcolorbox}

\begin{document}
\mainmatter              
\title{Influential Billboard Slot Selection using Pruned Submodularity Graph \thanks{The work of Dr. Suman Banerjee is supported by the Start Up Research Grant provided by Indian Institute of Technology Jammu, India (Grant No.: SG100047).} }
\titlerunning{Influential Billboard Slot Selection using Pruned Submodularity Graph}  
%
\author{Dildar Ali \and Suman Banerjee \and Yamuna Prasad}
\authorrunning{Ali et al.} 
%
\tocauthor{Ivar Ekeland, Roger Temam, Jeffrey Dean, David Grove,
Craig Chambers, Kim B. Bruce, and Elisa Bertino}
\institute{Department of Computer Science and Engineering,\\
Indian Institute of Technology Jammu \\
Jammu \& Kashmir-181221, India.\\
\email{2021rcs2009@iitjammu.ac.in, suman.banerjee@iitjammu.ac.in, yamuna.prasad@iitjammu.ac.in}}

\maketitle              

\begin{abstract}
Billboard Advertisement has emerged as an effective out-of-home advertisement technique and adopted by many commercial houses. In this case, the billboards are owned by some companies  and they are provided to the commercial houses slot\mbox{-}wise on a payment basis. Now, given the database of billboards along with their slot information which $k$ slots should be chosen to maximize the influence. Formally, we call this problem as the \textsc{Influential Billboard Slot Selection} Problem. In this paper, we pose this problem as a combinatorial optimization problem. Under the `triggering model of influence', the influence function is non-negative, monotone, and submodular. However, as the incremental greedy approach for submodular function maximization does not scale well along with the size of the problem instances, there is a need to develop efficient solution methodologies for this problem. 
\par In this paper, we apply the pruned submodularity graph-based pruning technique for solving this problem. The proposed approach is divided into three phases, namely, preprocessing, pruning, and selection. We analyze the proposed solution approach for its performance guarantee and computational complexity. We conduct an extensive set of experiments with real-world datasets and compare the performance of the proposed solution approach with many baseline methods. We observe that the proposed one leads to more amount of influence compared to all the baseline methods within reasonable computational time.


 \keywords{Submodular Function, Pruned Submodularity Graph, Out-of-home Advertisement, Trajectory Database}
\end{abstract}
\section{Introduction}
Creating and maximizing influence among the customers is one of the main goals of a commercial house. For this purpose, they spend around $7-10 \%$ of their annual revenue. Now, how to make use of this budget effectively remains an active area of research. There are several ways through which advertisement can be done such as social media, television channels, and many more. However, billboard advertisement has emerged as an effective approach for out-of-home advertisement as it provides more return on investment compared to other advertisement techniques. In recent times the billboards are digital and they are allocated time slot\mbox{-}wise to the commercial houses based on their payments.


\paragraph{\textbf{Problem Background}} In this advertisement technique, often different commercial houses select a limited number of billboard slots with the hope that the advertisement content will be observed by many people and a significant number of them will be influenced towards the product. This may increase the sales and revenue earned from the product. Now, due to the budget constraint, only a very small number of billboard slots can be affordable for the E-Commerce house. So, it's a prominent question that for a given value of $k \in \mathbb{Z}^{+}$, which of the billboard slots should be chosen for posting the advertisement content. recently, this problem has been studied by many researchers and different kinds of solution approaches have been proposed \cite{wang2022data}.

\paragraph{\textbf{Problem Definition}} Now-a-days due to the advancement of wireless devices and mobile internet, capturing the location information of moving objects become easier. This leads to the availability of many trajectory datasets in different repositories and they are being used to solve many real-life problems including route recommendation \cite{dai2015personalized,qu2019profitable}, driving behavior prediction \cite{xue2019rapid}, and many more. As mentioned previously, these trajectory datasets are also used to place billboards effectively. Consider for any city a trajectory database $\mathcal{D}$ is available. This database contains the location information of people along with the corresponding time stamps. Locations can be of different kinds e.g.; `Mall', `Beach', `Metro Station' and so on. Digital billboards are placed over those places and these billboards can be hired by different E-Commerce houses to show their advertisement content. Now, due to financial constraints, only very limited number of billboard slots can be hired. It's a natural research question among the available $n$ slots which $k$ of them (where $k << n$) should be chosen such that the influence is maximized. This is the problem that we have worked out in this paper.

\paragraph{\textbf{Related Work}} There are several studies in the literature related to  billboard advertisements. Zahradka et al. \cite{zahradka2021price} did a case study analysis of the cost of billboard advertising in the different regions of the Czech Republic. Zhang et al. \cite{zhang2020towards} studied the trajectory\mbox{-}driven influential billboard placement problem where a set of billboards along with its location and cost is given. The goal here is to choose a subset of the billboards within the budget that influence the largest number of trajectories. Wang et al. \cite{wang2022data} studied the problem of the  Targeted Outdoor Advertising Recommendation (TOAR) problem considering user profiles and advertisement topics. Their main contribution was a targeted influence model that characterizes the advertising influence spread along with the user mobility. Based on the divide and conquer approach they developed two solution strategies. Implementation with real-world datasets shows the efficiency and effectiveness of the proposed solution approaches. Also, there are few studies in the context of the billboard advertisement, that consider the minimization of regret that causes due to providing influence to the advertiser. Zhang et al. \cite{zhang2021minimizing} studied the problem of regret minimization and proposed several solution methodologies. Experimentation with real-world datasets showed the efficiency of the approaches.

 \paragraph{\textbf{Our Contributions}} In all the existing studies, the problem that has been considered is to identify influential locations for placing billboards. However, as now-a-days the billboards are digital, and they are allocated slot wise. So, it is important to consider these issues. In this paper we have made the following contributions: 
\begin{itemize}
\item We formulate the Influential Billboard Slot Selection Problem as a discrete optimization problem and showed that this problem is NP-Hard and hard to approximate within a constant factor.
\item We  propose a Pruned Submodularity Graph-based solution approach to solve this problem with its detailed analysis and illustration with a problem instance.
\item We conduct an extensive set of experiments with real-world billboard and trajectory datasets and compare the performance of the proposed algorithm with the existing solution approaches.   
\end{itemize}
\paragraph{\textbf{Organization of the Paper}}
The rest of the paper is organized as follows. Section \ref{Sec:PPD} describes the background and defines the problem formally. Section \ref{Sec:PA} describes the proposed solution approach. The experimental evaluations of the proposed solution approach have been described in Section \ref{Sec:EE}. Section \ref{Sec:CFD} concludes our study and gives future research directions.
\section{Preliminaries and Problem Definition} \label{Sec:PPD}
In this section, we describe the background and define the problem formally. Consider there are $m$ billboards $\mathcal{B}=\{b_1, b_2, \ldots, b_m\}$ and each one of them is running for the interval $[T_{1}, T_{2}]$ and assume that $T=T_{2}-T_{1}$. Also, assume that all the billboards are allocated slot-wise for display advertisement, and the duration of each slot is fixed and it is denoted by $\Delta$. These billboards are placed at different locations (e.g., street junctions, shopping malls, airports. metro stations, etc.) of a city. If some person $u_j$ comes close to any billboard $b_i$ at time $t$ and at that time if the advertisement content for some commercial house is running on that billboard then $u_j$ will be influenced towards the item with the probability $Pr(b_i,u_j)$. In this study, we assume that this value is known. However, the standard way of computing these values has also been described in the literature \cite{zhang2020towards}. For any positive integer $n$, by $[n]$ we denote the set $\{1,2, \ldots,n\}$. We denote any arbitrary billboard slot as a tuple containing two items: the first one is billboard id and the second one is the starting and ending time of a slot.  Let, $\mathbb{BS}$ denotes the set of all billboard slots; i.e.; $\mathbb{BS}=\{(b_i,[t_j,t_j+\Delta]): i \in [k] \text{ and } j \in \{1, \Delta+1, 2 \Delta +1, \ldots, \frac{T}{\Delta}+1\} \}$. Now, if $|\mathbb{BS}|=n $ then it is easy to observe that $n =k \cdot \frac{T}{\Delta}$.

 A trajectory database contains the location information of different persons and this is stated in Definition \ref{Def:TD}.
\begin{definition} [Trajectory Database] \label{Def:TD}
A trajectory database $\mathcal{D}$ is a collection of tuples of the following form :$<u_{id}, \texttt{loc}, \texttt{time\_stamp}>$. The description of each attribute are given below:

\begin{itemize}
\item $u_{id}$: This is the unique identification of a people.
\item \texttt{loc}: This is the location information of the people $u_{id}$.
\item \texttt{time}: This attribute stores the time information 
\end{itemize}
If there is a tuple $<u_{126}, \texttt{Chihago\_Airport}, [1800,2000]>$ in the trajectory database $\mathcal{D}$ then it signifies that the people with its unique identification number $u_{126}$ was at the place \texttt{Chicago\_Airport} from time stamp $1800$ to $2000$. 
\end{definition}
 One important point to highlight here is that in Definition \ref{Def:TD} we have listed  only the required attributes. However, in real-world datasets, we may have some more attributes as well; such as \texttt{trip\_id}, \texttt{vehicle\_id}, and many more. The set of unique user\_ids that are present in the trajectory database $\mathcal{D}$ is $\mathcal{U}=\{u_1, u_2, \ldots, u_n\}$. Next, we describe the Billboard Database.
 \begin{definition}[Billboard Database]
 The billboard database $\mathbb{B}$ contains the tuples of the following form: $<b_{id}, \texttt{loc}, \texttt{cost}>$. The meaning of each attribute has been given below:
 \begin{itemize}
 \item $b_{id}$: This attribute stores the unique ids of the billboards.
 \item \texttt{loc}: This attribute stores the location information of the billboard.
 \item  \texttt{cost}: This attribute stores the cost that needs to be paid by the E-Commerce house for renting one billboard. 
 \end{itemize}
 If there is a tuple $<b_{245}, \texttt{Chihago\_Airport}, 6>$ in the billboard database $\mathbb{B}$, then it signifies that the billboard whose unique id is $b_{245}$ is placed at the location \texttt{Chicago\_Airport} and $ \$ 6$ needs to be paid by the E-Commerce house for renting one billboard slot.
 \end{definition}
  Now, assume that the people $u_i \in \mathcal{U}$, is at the location \texttt{Chicago\_Airport} for the duration $[t_i, t^{'}_i]$ and for the duration $[t_j, t^{'}_j]$ at the billboard $b_i$ an advertisement of the brand XYZ. If $[t_j, t^{'}_j] \ \bigcap \ [t_i, t^{'}_i] \neq \emptyset$, then we can hope that the people $u_i$ will look into the advertisement and he will be influenced with the probability $Pr(b_j,u_i)$. In this study, we assume that for all $b_j \in \mathbb{BS}$, and people $u_i \in  \mathcal{U}$, we have these probability values. Next, we state the notion of the influence for a given subset of billboard slots in Definition \ref{Def:3}.
 
 \begin{definition} [Influence of Billboard Slots] \label{Def:3}
 Given a subset of billboard slots $\mathcal{S} \subseteq \mathbb{BS}$, we denote its influence by $I(\mathcal{S})$ and defined it as the sum of the influence probabilities of the individual people. Mathematically, this is characterized by Equation \ref{Eq:1}.
 \begin{equation} \label{Eq:1}
 I(\mathcal{S})= \underset{u_i \in \mathcal{U}}{\sum} [1 \ - \ \underset{b_j \in \mathbb{BS}}{\prod} (1-Pr(b_j,u_i))]
 \end{equation}
\end{definition} 
 It can be easily observed that the influence function $I$ is a set function that maps each possible subsets of billboard slots to their respective influence; i.e.; $I: 2^{\mathbb{BS}} \longrightarrow \mathbb{R}^{+}_{0}$ where $I(\emptyset)=0$. We list out the properties of the influence function in Lemma \ref{Lemma:1}. Due to space limitations, we are unable to provide its proof which will come in a subsequent journal version of this paper.
 
 \begin{lemma} \label{Lemma:1}
 The influence function $I()$ follows non-negativity, monotonicity, and submodularity property.
 
 \end{lemma}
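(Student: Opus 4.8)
The plan is to reduce the whole statement to a per-user computation. For a fixed user $u_i \in \mathcal{U}$, set $g_{u_i}(\mathcal{S}) = 1 - \prod_{b_j \in \mathcal{S}}(1 - Pr(b_j,u_i))$, so that $I(\mathcal{S}) = \sum_{u_i \in \mathcal{U}} g_{u_i}(\mathcal{S})$ by Equation~\ref{Eq:1}. Since a finite non-negative linear combination of non-negative, monotone, and submodular set functions is again non-negative, monotone, and submodular, it suffices to establish the three properties for each $g_{u_i}$ separately; I would open the proof by recording this reduction.

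Non-negativity and monotonicity are then immediate. Each probability satisfies $Pr(b_j,u_i) \in [0,1]$, hence every factor $1 - Pr(b_j,u_i)$ lies in $[0,1]$ and so does $\prod_{b_j \in \mathcal{S}}(1 - Pr(b_j,u_i))$; therefore $g_{u_i}(\mathcal{S}) \in [0,1]$, and the empty-product convention gives $g_{u_i}(\emptyset) = 0$, consistent with $I(\emptyset) = 0$. For monotonicity, for $\mathcal{S} \subseteq \mathcal{T} \subseteq \mathbb{BS}$ I would factor $\prod_{b_j \in \mathcal{T}}(1 - Pr(b_j,u_i)) = \left(\prod_{b_j \in \mathcal{S}}(1 - Pr(b_j,u_i))\right)\left(\prod_{b_j \in \mathcal{T}\setminus \mathcal{S}}(1 - Pr(b_j,u_i))\right)$; since the trailing factor lies in $[0,1]$, the product over $\mathcal{T}$ does not exceed the product over $\mathcal{S}$, so $g_{u_i}(\mathcal{T}) \ge g_{u_i}(\mathcal{S})$, and summing over $u_i$ gives $I(\mathcal{T}) \ge I(\mathcal{S})$.

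The substantive step is submodularity, which I would obtain from a closed form for the marginal gain. For any $\mathcal{S} \subseteq \mathbb{BS}$ and $x \in \mathbb{BS}\setminus\mathcal{S}$,
\[
g_{u_i}(\mathcal{S}\cup\{x\}) - g_{u_i}(\mathcal{S}) \;=\; \prod_{b_j \in \mathcal{S}}(1 - Pr(b_j,u_i)) - \prod_{b_j \in \mathcal{S}\cup\{x\}}(1 - Pr(b_j,u_i)) \;=\; Pr(x,u_i)\prod_{b_j \in \mathcal{S}}(1 - Pr(b_j,u_i)).
\]
Taking $\mathcal{S} \subseteq \mathcal{T} \subseteq \mathbb{BS}$ with $x \notin \mathcal{T}$, the same factoring as above gives $\prod_{b_j \in \mathcal{T}}(1 - Pr(b_j,u_i)) \le \prod_{b_j \in \mathcal{S}}(1 - Pr(b_j,u_i))$; multiplying both sides by the non-negative scalar $Pr(x,u_i)$ yields $g_{u_i}(\mathcal{T}\cup\{x\}) - g_{u_i}(\mathcal{T}) \le g_{u_i}(\mathcal{S}\cup\{x\}) - g_{u_i}(\mathcal{S})$. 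Summing over all $u_i \in \mathcal{U}$ gives $I(\mathcal{T}\cup\{x\}) - I(\mathcal{T}) \le I(\mathcal{S}\cup\{x\}) - I(\mathcal{S})$, which is the diminishing-returns characterization of submodularity.

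I do not anticipate a real obstacle here: the argument is elementary once the per-user decomposition is in place. The only points needing care are the empty-product bookkeeping in the base cases and the (presumably typographical) detail that the product in Equation~\ref{Eq:1} should range over $\mathcal{S}$ rather than over all of $\mathbb{BS}$, so that $I$ genuinely depends on its argument; with the closed form for the marginal gain in hand, submodularity collapses to the one-line monotonicity comparison above.
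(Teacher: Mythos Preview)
Your argument is correct and is the standard one: decompose $I$ as a sum of per-user coverage functions $g_{u_i}$, verify the three properties for each $g_{u_i}$ via the closed-form marginal gain $Pr(x,u_i)\prod_{b_j\in\mathcal{S}}(1-Pr(b_j,u_i))$, and then lift to $I$ by non-negative summation. You also correctly flag that the product in Equation~\ref{Eq:1} must range over $\mathcal{S}$, not $\mathbb{BS}$, for $I$ to depend on its argument.

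There is nothing to compare against: the paper explicitly omits the proof of Lemma~\ref{Lemma:1} due to space limitations and defers it to a journal version, so your write-up would in fact supply what the paper leaves out.
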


  Now, it is important to observe that as the selection of billboard slots is involved with money, hence a limited number of them can be chosen. In commercial campaigns, the goal will be to maximize the influence. Then the question arises which $k$ billboard slots should be chosen to maximize the influence. This problem has been referred to as the \textsc{Influential Billboard Slot Selection} Problem which is stated in Definition \ref{Def:Problem}. 
  \begin{definition} [Influential Billboard Slot Selection Problem] \label{Def:Problem}
  Given a trajectory database $\mathcal{D}$ and its corresponding billboard database $\mathbb{B}$, the problem of Influential Billboard Slot Selection asks to choose a subset of $k$ billboard slots such that the influence as stated in Equation \ref{Eq:1} is maximized. Mathematically, this problem can be stated as follows:
  \begin{equation}
  \mathcal{S}^{OPT}= \underset{\mathcal{S} \ \subseteq \ \mathbb{BS}, | \mathcal{S}|=k}{argmax} \ I(\mathcal{S})
  \end{equation}
  \end{definition} 
  From the algorithmic point of view, the \textsc{Influential Billboard Slot Selection Problem} can be given as the following text box.
  
\begin{center}
\begin{tcolorbox}[title=\textsc{Influential Billboard Slot Selection Problem}, width=12.5cm]
\textbf{Input:} The Set of Billboard Slots $\mathbb{BS}$, The Influence Function $I()$, The Trajectory Database $\mathcal{D}$, The number of slots $k$.

\textbf{Problem:} Find out a  set $\mathcal{S} \subseteq \mathbb{BS}$ with $|\mathcal{S}|=k$ such that $I(\mathcal{S})$ is maximized.
\end{tcolorbox}
\end{center}  
  
We denote any arbitrary instance of the Influential Billboard Slot Selection Problem by $I=(\mathbb{BS}, \mathcal{S},k)$. First, we show that the Influential Billboard Slot Selection Problem is \textsf{NP-Hard} by a reduction from the Set Cover Problem.
  \begin{theorem}
  The Influential Billboard Slot Selection Problem is \textsf{NP-Hard}.
  \end{theorem}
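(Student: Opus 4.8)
The plan is to prove \textsf{NP-Hardness} by a polynomial-time reduction from the classical \textsc{Set Cover} problem, whose decision version is well known to be \textsf{NP-complete}. Recall an instance of \textsc{Set Cover} consists of a ground set $X = \{x_1, x_2, \ldots, x_n\}$, a collection of subsets $\mathcal{C} = \{C_1, C_2, \ldots, C_m\}$ with each $C_j \subseteq X$, and a positive integer $k$; the question is whether there exist $k$ sets from $\mathcal{C}$ whose union is all of $X$.

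First, I would construct the corresponding instance of the \textsc{Influential Billboard Slot Selection Problem}. The idea is to make each subset $C_j$ correspond to a billboard slot and each ground element $x_i$ correspond to a person in the user set $\mathcal{U}$. Concretely, I would set $\mathbb{BS} = \{bs_1, bs_2, \ldots, bs_m\}$ with one slot $bs_j$ per subset $C_j$, and $\mathcal{U} = \{u_1, u_2, \ldots, u_n\}$ with one person $u_i$ per element $x_i$. For the influence probabilities, I would put $Pr(bs_j, u_i) = 1$ whenever $x_i \in C_j$ and $Pr(bs_j, u_i) = 0$ otherwise. Finally I would carry over the same budget $k$. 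This construction is clearly computable in time polynomial in the size of the \textsc{Set Cover} instance.

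Next I would establish the correctness of the reduction, i.e. that \textsc{Set Cover} has a solution of size $k$ if and only if the constructed instance admits a set $\mathcal{S} \subseteq \mathbb{BS}$ of size $k$ with $I(\mathcal{S}) = n$. The key observation is that, under the $0$/$1$ probabilities, the inner product $\prod_{bs_j \in \mathcal{S}}(1 - Pr(bs_j, u_i))$ in Equation \ref{Eq:1} equals $0$ exactly when some chosen slot $bs_j$ has $x_i \in C_j$, and equals $1$ otherwise; hence the contribution of $u_i$ to $I(\mathcal{S})$ is $1$ iff $x_i$ is covered by the sets picked in $\mathcal{S}$, and $0$ otherwise. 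Summing over all $u_i$, we get $I(\mathcal{S}) = |\bigcup_{bs_j \in \mathcal{S}} C_j|$, which attains its maximum possible value $n$ precisely when the selected subsets form a set cover. Thus a size-$k$ set cover yields $\mathcal{S}$ with $I(\mathcal{S}) = n$, and conversely $I(\mathcal{S}) = n$ forces the corresponding subsets to cover $X$. Since $I$ is monotone (Lemma \ref{Lemma:1}), deciding whether the optimal value equals $n$ is exactly what an algorithm for the optimization problem answers, so a polynomial-time algorithm for \textsc{Influential Billboard Slot Selection} would decide \textsc{Set Cover} in polynomial time.

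The main obstacle, and the point I would be most careful about, is making the equivalence between the two optimal values airtight: one must argue both that the maximum attainable influence in the constructed instance is exactly $n$ (never more, because each person contributes at most $1$), and that achieving this value with a budget of $k$ is equivalent to the existence of a $k$-set cover. I would also note that using degenerate probabilities of $0$ and $1$ is legitimate since Definition \ref{Def:3} permits arbitrary probability values in $[0,1]$; if one prefers to stay strictly inside $(0,1)$, the same argument goes through by replacing $1$ with $1 - \varepsilon$ for a sufficiently small $\varepsilon$ and thresholding the objective appropriately, at the cost of a slightly more delicate numerical comparison.
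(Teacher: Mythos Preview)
Your proposal is correct and takes essentially the same approach as the paper: the paper reduces from \textsc{Hitting Set} (elements become billboards, subsets become trajectories), which is simply the dual formulation of your \textsc{Set Cover} reduction (subsets become billboard slots, elements become people). Your argument is in fact more explicit than the paper's outline---you work directly with the probability values and verify the equality $I(\mathcal{S}) = |\bigcup_{bs_j \in \mathcal{S}} C_j|$ from Equation~\ref{Eq:1}, whereas the paper phrases the construction via locations and trajectories and omits this computation---but the underlying idea is identical.
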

    
\begin{proof} (Outline)
We prove this statement by a reduction from the Hitting Set Problem. We denote an arbitrary instance of the set cover problem by $I^{'}=(\mathcal{U}, \mathcal{X}, k^{'})$. Here, $\mathcal{U}$ is the ground set, $\mathcal{X}$ is the collection of the subsets over the ground set. The goal here is to choose $k^{'}$ many elements from the ground set $\mathcal{U}$ such that every subset in $\mathcal{X}$ contains at least one element from the chosen elements. It is well known that this problem is NP-Hard \cite{vazirani2001approximation}.
\par Now, we provide a polynomial time reduction from the Hitting Set Problem to the Influential Billboard Slot Selection Problem. Without loss of generality, we assume that the elements of the ground set are the subset of the set of natural numbers and starting from $1$. Also, for simplicity, we assume that there is only one slot in a billboard. Now, the  construction is as follows. For every $i \in \mathcal{U}$, we create one location with its id as $\ell_{i}$, one billboard with id $b_{i}$, and place the billboard at that location. For every subset $x \in \mathcal{X}$, we create one trajectory that contains the locations. We fix $k=k^{'}$. We want to influence all the trajectories. Now, it is easy to observe that the hitting set problem instance will have a solution of size $k^{'}$ if and only if the influential billboard slot selection problem instance has a solution of size $k$. Due to space limitations, we have only given an outline of the proof. 
\end{proof}

\section{Proposed Solution Approach}  \label{Sec:PA}
In this section, we describe the proposed solution approaches for this problem. Initially, we start by describing the Marginal Influence Gain of a Billboard Slot in Definition \ref{Def:MIG}.

\begin{definition} [Marginal Influence Gain of a Billboard Slot]
 \label{Def:MIG}
Given a subset of billboard slots $\mathcal{S} \ \subseteq \ \mathbb{BS}$ and a particular billboard $b \in \mathbb{BS} \setminus \mathcal{S}$, the marginal influence gain of the billboard slot $b$ with respect to the billboard slots in $\mathcal{S}$ is denoted by $\Delta (b| \mathcal{S})$ and defined as the difference between the influence when the billboard slot is included with $\mathcal{S}$ and when it is not. Mathematically, this is stated in Equation 2
\begin{equation}
\Delta (b \ | \ \mathcal{S}) = I(\mathcal{S} \ \cup \ \{b\}) \ - \ I(\mathcal{S})
\end{equation}
\end{definition}
 As shown that the influence function follows the summodularity property, hence this problem reduces to the problem of submodular function maximization subject to the cardinality constraint. This can be solved in the following way. Starting with an empty set, in each iteration we pick up the slot that causes maximum marginal gain. As shown by Nemhauser et al. \cite{nemhauser1978analysis,fisher1978analysis} this method leads to $(1-\frac{1}{e})$ factor approximation guarantee. However, it is easy to observe that in each iteration the number of marginal gain computation is of $\mathcal{O}(n)$ where $n$ is the total number of billboard slots which is quite large. This leads to huge computational burden for real-world problem instances. So, there is a need to develop efficient solution methodology. In this paper, we apply the pruned submodularity graph-based pruning technique to solve our problem \cite{zhou2017scaling}. Before describing the proposed solution methodology, first we introduce the notion of pruned submodularity graph in Definition \ref{Def:PS_Graph} which is the heart of the proposed solution approach.
 
\begin{definition}  [Pruned Submodularity Graph ] \label{Def:PS_Graph}
This is a weighted, directed graph $G(V,E,w)$ where the vertex set is the set of billboard slots; between every pair of billboard slots $b_i,b_j$, $(b_ib_j)$ will be an edge in $G$. The edge weight function $w$ maps each edge to a real number. Hence, the vertex set $V(G)=\mathbb{BS}$, the edge set $E(G)=\{(b_ib_j):\ i,j \in [p] \text{ and } i \neq j\}$, and the edge weight function $w$ is defined as follows:
\begin{equation} \label{Eq:weight}
w(b_i,b_j)= I(b_j \ \vert \ b_i) \ - \ I(b_i \ \vert \ \mathbb{BS} \setminus b_i)
\end{equation}
\end{definition}
This edge weight has got practical significance. Consider an edge $(b_ib_j)$ and its edge weight $w(b_ib_j)$. It is important to observe that the quantity $I(b_j \ \vert \ b_i)$ signifies that the maximum influence that the billboard slot can contribute when the billboard slot $b_i$ is already contained in the solution. Whereas the quantity $I(b_i \ \vert \ \mathbb{BS} \setminus b_i)$ signifies the billboard slot. As in our problem context, a vertex in the pruned submodularity graph corresponds to a billboard slot, hence in the rest of the paper we use the terminology `vertex' and `slot' interchangeably. Next, we state the divergence of a node of a pruned submodularity graph in Definition \ref{Def:Divergence}.

\begin{definition}  [Divergence of a Node] \label{Def:Divergence}
On the pruned submodularity graph $G(V,E,w)$, the divergence of a node $v$ from a set of nodes $V^{'}$ is defined as $w_{V^{'}v}= \underset{x \in V^{'}}{min} \ w_{xv}$.
\end{definition}

\par Next, we present the description of the proposed solution approach.

\paragraph{\textbf{Description of the Algorithm}} Now, we describe our proposed solution methodology. First, we perform the following preprocessing task. Billboard slots having their individual influence $0$ is removed from the list. As the influence function is submodular, for any billboard slot the marginal gain with respect to any set will always be less than its individual influence. After that, we construct the pruned submodularity graph with the slots having non-zero individual influence at Line No. $8$. Once we are done with the construction of the tree, we perform the pruning from Line No. $9$ to $16$. This works in the following way. In each iteration of the \texttt{while} loop we randomly sample out $r \cdot \log n$ many slots and put them in the list $\mathcal{U}$. Now, for all the remaining slots in $\mathbb{BS}$, we compute the value of $w_{\mathcal{U}v}$ as mentioned in Definition \ref{Def:Divergence}. From this list, we remove $(1-\frac{1}{\sqrt{c}})$ fraction of the slots having the smallest value of $w_{\mathcal{U}v}$. Once we are done with the pruning step, we execute the incremental greedy approach to finally select the billboard slots. The described solution approach has been represented in the form of pseudo code in Algorithm \ref{Algo: Algorithm1_Simple Greedy}.


 \begin{algorithm}[h]
\SetAlgoLined
\KwData{$\mathbb{BS}$,  $I()$, $\mathcal{D}$, $r$, $c$, and, $k$}
\KwResult{  A subset of $\mathbb{BS}$}
 Initialize $\mathcal{S} \leftarrow \emptyset$ \;
 \For{$\text{All }b \in \mathbb{BS}$}{
 \If{$I(b)==0$}{
 $\mathbb{BS} \longleftarrow \mathbb{BS} \setminus \{b\}$\;
 }
 }
 Initialize $\mathcal{S}^{'} \leftarrow \emptyset$, $\mathcal{U} \longleftarrow \emptyset$, $|\mathbb{BS}|=n$ \;
 $\text{Construct the pruned submodularity graph with the slots in }\mathbb{BS}$\;
 \While{$|\mathbb{BS}| > \ r \cdot \log n$}{
 $\text{Sample out }r \cdot \log n \text{ slots uniformly at random from }\mathbb{BS}  \text{ and place them in }\mathcal{U}$\; 
 $\mathbb{BS} \longleftarrow \mathbb{BS} \setminus \mathcal{U}$, $\mathcal{S}^{'} \longleftarrow \mathcal{S}^{'} \cup \mathcal{U} $\;
 \For{$\text{All }b \in \mathbb{BS} $}{
 $w_{\mathcal{U}b} \longleftarrow \ \underset{u \in \mathcal{U}}{min} \ [I(b|u)-I(u| \mathbb{BS} \setminus \{u\})]$\;
 }
 $\text{Remove }(1-\frac{1}{\sqrt{c}}) \cdot |\mathbb{BS}| \text{ elements from }\mathbb{BS} \text{ having smallest value of }w_{\mathcal{U}v}$\;
 }
 $\mathbb{BS} \longleftarrow \mathbb{BS} \  \bigcup \ \mathcal{S}^{'} $\;
 \While{$|\mathcal{S}| \neq k$}{
 $u^{*} \longleftarrow \underset{u \in \mathbb{BS} \setminus \mathcal{S}}{argmax} \ I(\mathcal{S} \cup \{u\}) - I(\mathcal{S})$\;
 $\mathcal{S} \longleftarrow \mathcal{S} \cup \{u^{*}\}$\;
 }
Return $\mathcal{S}$\;
 \caption{ Pruned Submodularity Graph+Incremental Greedy Approach for Influential Billboard Slot Selection Problem}
 \label{Algo: Algorithm1_Simple Greedy}
\end{algorithm}	
\paragraph{\textbf{Complexity Analysis}} Now, we analyze our proposed solution approach to understand its time and space requirements. Initialization at Line No. $1$ will take $\mathcal{O}(1)$ time. Now, for any arbitrary billboard slot $b \in \mathbb{BS}$, its individual influence computation using Equation No. \ref{Eq:1} will take $\mathcal{O}(t)$ time where $t$ is the number of tuples in the trajectory database. As there are $n$ billboard slots, the time required for execution from Line No. $2 $ to $6$ will take $\mathcal{O}(n \cdot t)$ time. After that, the initialization statements of Line No. $7$ take $\mathcal{O}(1)$ time. In the worst case, it may so happen that all the billboard slots lead to non-zero influence. In that case, the pruned submodularity graph will have $n$ vertices. So, there are $\mathcal{O}(n^{2})$ billboard slot pairs for whom the edge weight needs to be calculated. From Equation No. \ref{Eq:weight}, it is easy to observe that for each edge, its corresponding weight can be calculated in $\mathcal{O}(n \cdot t)$ time. Hence, the construction of the pruned submodularity graph will take $\mathcal{O}(n^{2} \cdot t)$ time. It is easy to observe that the number of iterations of the \texttt{while loop} will be of $\mathcal{O}(\log_{\sqrt{c}} n)$. Removing $\mathcal{O}(\log n)$ many elements from $\mathbb{BS}$ uniformly at random will take $\mathcal{O}(\log n)$ time. Also, in each iteration, the cost of computing the quantity $w_{\mathcal{U}v}$ is also involved. In the set $\mathcal{U}$, there are $\mathcal{O}(r \cdot \log n)=\mathcal{O}(\log n)$ many slots. In the pruned submodularity graph every vertex is directly linked with every other vertex in the graph. Now, for every vertex $b \in \mathbb{BS}$ in the pruned submodularity graph, we need to compute the function as mentioned in Line No. $13$ of Algorithm \ref{Algo: Algorithm1_Simple Greedy}. Now, we can store these values while computing the weights of the edges of the pruned submodularity graph. So, for a billboard slot $b \in \mathbb{BS}$ and any $u \in \mathcal{U}$ the computation $w_{bu}$ can be computed in $\mathcal{O}(1)$ time. As there are $\mathcal{O}(\log n)$ many elements in $\mathcal{U}$, so computing the value of $w_{\mathcal{U}b}$ will take $\mathcal{O}(\log n)$. As the number of elements in $\mathbb{BS}$ is upper bounded by $\mathcal{O}(n)$, performing the steps from Line No. $12$ to $14$ will take $\mathcal{O}(n \log n)$ time. Sorting these values will take $\mathcal{O}(n \log n)$ time and removing $(1-\frac{1}{\sqrt{c}})$ fraction of the elements of $\mathbb{BS}$ will take $\mathcal{O}(n)$ time in the worst case. So, a single iteration of the \texttt{while} loop will take $\mathcal{O}(\log n + n \cdot \log n + n)=\mathcal{O}(n \cdot \log n)$ time. So, the execution time of the \texttt{while} loop will be of $\mathcal{O}(n \cdot \log n \cdot \log_{\sqrt{c}} n)=\mathcal{O}(n \cdot \log^{2} n)$. Now, it is easy to observe that the \texttt{while} loop at Line No. $18$ will iterate $\mathcal{O}(k)$ many times. The only operation involved within this \texttt{while} loop is the marginal gain computation. For any billboard slot $b \in \mathbb{BS} \setminus \mathcal{S}$, its marginal gain computation will take $\mathcal{O}(n \cdot t)$ time. So, the execution of this \texttt{while} loop will take $\mathcal{O}(k \cdot n \cdot t)$ time. Hence, the total time requirement of Algorithm \ref{Algo: Algorithm1_Simple Greedy} will be of $\mathcal{O}(n \cdot t + n^{2} \cdot t + n \cdot \log^{2} n + k \cdot n \cdot t)$. As $k < < n$, this quantity is reduced to $\mathcal{O}( n^{2} \cdot t + n \cdot \log^{2} n)$.
\par Additional space requirement by Algorithm \ref{Algo: Algorithm1_Simple Greedy} is to store the following lists $\mathcal{S}$, $\mathcal{S}^{'}$ and $\mathcal{U}$ which will take $\mathcal{O}(n)$, $\mathcal{O}(n)$, and $\mathcal{O}(\log n)$, respectively. Other than this storing the pruned submodularity graph will take $\mathcal{O}(n^{2})$ space. One important thing to observe here is that we can store the edge weights in the adjacency matrix of the pruned submodularity graph and this can be used for the computation involved in Line No. $13$. This does not require any more extra space. So, the total space requirement by Algorithm \ref{Algo: Algorithm1_Simple Greedy} is of $\mathcal{O}(n + \log n +n^{2})=\mathcal{O}(n^{2})$. Hence, Theorem \ref{Th:Time_and_Space} holds.

\begin{theorem} \label{Th:Time_and_Space}
Time and space requirement by Algorithm \ref{Algo: Algorithm1_Simple Greedy} is of $\mathcal{O}( n^{2} \cdot t + n \cdot \log^{2} n)$ and $\mathcal{O}(n^{2})$, respectively.
\end{theorem}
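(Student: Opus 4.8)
The plan is to prove Theorem~\ref{Th:Time_and_Space} by a line-by-line accounting of Algorithm~\ref{Algo: Algorithm1_Simple Greedy}, grouping its execution into four phases: (i) the preprocessing that discards zero-influence slots (Lines~1--7), (ii) the construction of the pruned submodularity graph (Line~8), (iii) the pruning \texttt{while} loop (Lines~9--16), and (iv) the incremental greedy selection \texttt{while} loop (Lines~17--20). For each phase I would bound the number of elementary operations in terms of $n=|\mathbb{BS}|$, the number of trajectory tuples $t$, the sampling parameter $r$, the pruning parameter $c$, and $k$, and then add the four bounds and simplify using $k \ll n$ and $r=\mathcal{O}(1)$.

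For phase~(i), a single evaluation of $I(\cdot)$ on a singleton via Equation~\ref{Eq:1} costs $\mathcal{O}(t)$, since the product effectively ranges over the trajectory tuples touching that slot; hence the loop over all $n$ slots costs $\mathcal{O}(n\cdot t)$, and the remaining initializations are $\mathcal{O}(1)$. For phase~(ii), in the worst case no slot is pruned, so the graph has $n$ vertices and $\mathcal{O}(n^2)$ ordered pairs; each edge weight $w(b_i,b_j)$ from Equation~\ref{Eq:weight} is a difference of two marginal-influence terms, each computable in $\mathcal{O}(t)$ time, giving $\mathcal{O}(n^2\cdot t)$ overall. A key point I would emphasize here is that while the edge weights are being computed we also tabulate, for every slot $u$, the scalar $I(u\mid \mathbb{BS}\setminus\{u\})$ and, for every ordered pair, $I(b\mid u)$ (these are exactly the ingredients of Line~13), storing them in the adjacency matrix of the graph; this costs no extra asymptotic space and makes every later divergence lookup $\mathcal{O}(1)$.

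For phase~(iii), each iteration first moves the $r\log n$ sampled slots aside and then deletes a $(1-\tfrac{1}{\sqrt c})$ fraction of the survivors, so the surviving set shrinks geometrically and the number of iterations is $\mathcal{O}(\log_{\sqrt c} n)=\mathcal{O}(\log n)$. Within an iteration, sampling is $\mathcal{O}(\log n)$; for each of the $\mathcal{O}(n)$ remaining slots, $w_{\mathcal{U}v}$ is a minimum over the $\mathcal{O}(r\log n)=\mathcal{O}(\log n)$ elements of $\mathcal{U}$ of precomputed $\mathcal{O}(1)$ lookups, i.e.\ $\mathcal{O}(n\log n)$ total; and sorting to extract the smallest-divergence elements is a further $\mathcal{O}(n\log n)$. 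Thus the loop costs $\mathcal{O}(n\log^2 n)$. For phase~(iv), the loop runs $\mathcal{O}(k)$ times and each iteration scans $\mathcal{O}(n)$ candidate slots, computing a marginal gain in $\mathcal{O}(t)$ each (or $\mathcal{O}(n\cdot t)$ if $I$ is recomputed from scratch), for $\mathcal{O}(k\cdot n\cdot t)$ in total. Summing the four phases gives $\mathcal{O}(n t + n^2 t + n\log^2 n + k n t)=\mathcal{O}(n^2\cdot t + n\cdot\log^2 n)$ because $k\ll n$. For the space bound, $\mathcal{S},\mathcal{S}',\mathcal{U}$ occupy $\mathcal{O}(n)$, $\mathcal{O}(n)$, $\mathcal{O}(\log n)$ respectively, and the adjacency matrix (which also holds the tabulated quantities) occupies $\mathcal{O}(n^2)$, dominating the total at $\mathcal{O}(n^2)$.

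The main obstacle I anticipate is not any individual estimate but keeping the bookkeeping honest at two points. First, I must justify the claimed per-operation costs for $I(\cdot)$, for the edge weight of Equation~\ref{Eq:weight}, and for a marginal gain; this depends on the assumed representation of the slot-to-trajectory incidence and on the fact that both operands of Equation~\ref{Eq:weight} involve only $b_i$, $b_j$ and the fixed set $\mathbb{BS}\setminus\{b_i\}$, so the expensive term $I(b_i\mid \mathbb{BS}\setminus b_i)$ can be computed once and reused. Second, I must verify that the precomputation of phase~(ii) genuinely reduces Line~13 to $\mathcal{O}(1)$ per pair, since otherwise re-invoking $I(\cdot)$ inside the pruning loop would inflate phase~(iii) to $\mathcal{O}(n^2 t\log n)$ and break the stated bound. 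Once the representation assumption is stated explicitly and these two reuse arguments are in place, the remainder is a routine geometric-series summation together with the $k\ll n$ simplification, which yields Theorem~\ref{Th:Time_and_Space}.
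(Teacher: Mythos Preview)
Your proposal is correct and follows essentially the same approach as the paper: the same four-phase decomposition (preprocessing, graph construction, pruning loop, greedy loop), the same per-phase bounds $\mathcal{O}(nt)$, $\mathcal{O}(n^2 t)$, $\mathcal{O}(n\log^2 n)$, $\mathcal{O}(knt)$, the same crucial observation that edge-weight values are tabulated during phase~(ii) so that Line~13 becomes an $\mathcal{O}(1)$ lookup, and the same space argument dominated by the $\mathcal{O}(n^2)$ adjacency matrix. If anything, you are a bit more explicit than the paper in flagging the two reuse points (the cached $I(u\mid \mathbb{BS}\setminus\{u\})$ and the per-pair lookups) as the places where the bound would break without precomputation.
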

 
\paragraph{\textbf{Theoretical Analysis Related to the Performance}} Due to the space limitation, we do not elaborate the analysis. In our analysis, we have used the result from the study in \cite{zhou2017scaling} and we state the result in Theorem \ref{Zohu_Result}.

\begin{theorem} \cite{zhou2017scaling} \label{Zohu_Result}
If we apply the pruned submodularity graph-based pruning technique then the size of the reduced ground set will be $|V^{'}|=\frac{cp}{\log \sqrt{c}} \cdot k \cdot \log^{2}n$ where $p$ is the probability of sampling. With high probability (e.g.; $n^{1-qp} \cdot \log_{\sqrt{c}}n$), for all $v \in V \setminus V^{'}$, $w_{V^{'}v} \leq 2 \cdot w_{V^{*}}$. Thus the incremental greedy algorithm for the submodular function maximization on the ground set leads to a solution $\mathcal{S}^{'}$ that follows the following criteria:
\begin{equation}
I(\mathcal{S}^{'}) \geq (1-\frac{1}{e}) \cdot (I(\mathcal{S}^{OPT})- 2k\epsilon)
\end{equation}
where $\mathcal{S}^{OPT}$ is the set of optimal slots of size $k$. 
\end{theorem}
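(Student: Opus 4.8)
The plan is to adapt the pruned-submodularity-graph analysis of \cite{zhou2017scaling} to the influence function $I$, whose non-negativity, monotonicity and submodularity are guaranteed by Lemma \ref{Lemma:1}. The argument splits into three parts: (i) a structural lemma bounding the loss incurred when a vertex is deleted, in terms of its divergence $w_{V'v}$; (ii) a probabilistic analysis of the sampling-and-pruning \texttt{while} loop (Lines $9$--$16$) showing that after $\mathcal O(\log_{\sqrt c} n)$ rounds the surviving ground set $V'$ has size $\frac{cp}{\log\sqrt c}\,k\log^2 n$ while, with high probability, $\max_{v\in V\setminus V'} w_{V'v}\le 2 w_{V^*}$; and (iii) stitching these together with the classical greedy bound of Nemhauser et al.

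For part (i), fix a deleted vertex $v$ and a retained vertex $x\in V'$ achieving $w_{V'v}=w(x,v)=I(v\mid x)-I(x\mid \mathbb{BS}\setminus x)$. Using submodularity, for any $\mathcal S\subseteq V'$ one shows that $I(\mathcal S\cup\{v\})-I(\mathcal S)\le I(v\mid x)$, and more to the point that substituting $x$ for $v$ (or simply dropping $v$) in a size-$k$ set costs at most $2w(x,v)$ in objective value. Iterating this over all $v\in \mathcal S^{OPT}\setminus V'$ and telescoping the marginal gains against a nested chain of sets yields $\max_{\mathcal S\subseteq V',|\mathcal S|=k} I(\mathcal S)\ge I(\mathcal S^{OPT})-2k\epsilon$, where $\epsilon$ is the divergence bound from part (ii); the care needed here is to evaluate each marginal gain against the correct intermediate set so that submodularity is applied in the right direction.

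For part (ii), the key claim is that a vertex $v$ whose divergence from the full set is large survives a round only if the uniformly random sample $\mathcal U$ (of size $r\log n$) misses the small collection of vertices that certify $v$'s large weight; since $|\mathcal U|=\Theta(\log n)$, this event has probability $n^{-\Theta(p)}$, and a union bound over the $\mathcal O(\log n)$ rounds and all vertices gives the stated failure probability $n^{1-qp}\cdot\log_{\sqrt c}n$. Combining this with the deterministic geometric shrinkage $|\mathbb{BS}|\mapsto |\mathbb{BS}|/\sqrt c$ per round, started from $n$ and stopped at $r\log n$, bounds the total number of sampled vertices and hence $|V'|=|\mathcal S'|+r\log n=\frac{cp}{\log\sqrt c}\,k\log^2 n$. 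For part (iii), the incremental greedy loop (Lines $18$--$21$) run on the ground set $V'$ returns $\mathcal S'$ with $I(\mathcal S')\ge(1-\tfrac1e)\max_{\mathcal S\subseteq V',|\mathcal S|=k}I(\mathcal S)$ by the Nemhauser--Wolsey--Fisher guarantee, and combining with part (i) gives $I(\mathcal S')\ge(1-\tfrac1e)(I(\mathcal S^{OPT})-2k\epsilon)$.

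The step I expect to be the main obstacle is the probabilistic analysis in part (ii): one must formalize what ``the vertices certifying a large weight'' means for our edge-weight function, and, more delicately, handle the adaptivity of the loop --- the sample drawn in one round determines which vertices remain eligible in the next, so the union bound has to be carried out with a strengthened round-by-round invariant, exactly as in \cite{zhou2017scaling}. The structural lemma (i) is routine given submodularity, and (iii) is a direct invocation of a known theorem.
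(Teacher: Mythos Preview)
The paper does not actually prove this theorem: it explicitly states ``Due to the space limitation, we do not elaborate the analysis,'' and then merely quotes the statement from \cite{zhou2017scaling} as a black box. So there is no proof in the paper to compare your proposal against; your three-part sketch (structural replacement lemma, probabilistic coverage of the sampling loop, and the Nemhauser--Wolsey--Fisher wrap-up) is in fact a reconstruction of the argument in \cite{zhou2017scaling} that the present paper simply imports.

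As a sketch of that underlying argument your plan is on the right track, but a couple of points would need tightening before it becomes a proof. In part~(i) the inequality $I(\mathcal S\cup\{v\})-I(\mathcal S)\le I(v\mid x)$ follows from submodularity only when $x\in\mathcal S$, so the ``substitute $x$ for $v$'' step must be phrased so that the witnessing $x$ is already (or is first forced into) the comparison set; this is exactly why the edge weight subtracts $I(x\mid \mathbb{BS}\setminus x)$, which bounds the additional cost of inserting $x$. In part~(ii) your informal phrase ``the vertices certifying a large weight'' has to be made precise as the set of $u$ with $w(u,v)$ below the current quantile threshold, and the size of that set must be tied to the $(1-1/\sqrt c)$ removal fraction so that the probability a uniform sample of size $r\log n$ misses all of them is $n^{-\Theta(r)}$; you correctly flag the adaptivity issue, and the fix (a per-round invariant conditioned on past samples) is indeed the crux of the Zhou et al.\ analysis. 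Part~(iii) is, as you say, a direct citation.
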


\par One important thing to point out is that in Algorithm \ref{Algo: Algorithm1_Simple Greedy} there are two parameters $r$ and $c$. The parameter $c$ controls the shrink rate of the billboard slots (decreases at a rate of $\frac{1}{\sqrt{c}}$) whereas the parameter $r$ controls the size of the set $\mathcal{U}$, and eventually it controls the size of the final output of the pruning method. Now, it is important what value we should choose for $c$ and $r$. As mentioned in the study by Zhou et al. \cite{zhou2017scaling}, when the value of both $r$ and $c$ is $8$, the pruning method converges very quickly. In this study also we consider these values only. 

\section{Experimental Evaluation} \label{Sec:EE}
In this section, we describe the experimental evaluation of the proposed solution methodology. Initially, we start by describing the datasets.
\subsection{Datasets Used}
The datasets that have been used in our study have also been used by existing studies as well \cite{zhang2020towards}. The trajectory data is obtained from two real-world datasets. The first one is the TLC trip record dataset \footnote{\url{http://www.nyc.gov/html/tlc/html/about/trip_record_data.shtml.}} for NYC and the second one is the Foursquare check-in dataset \footnote{\url{https://sites.google.com/site/yangdingqi/home.}} Both these datasets contain the records of green taxi trips from Jan 2013 to Sep 2016 and different location types such as Mall, Beach, Airport, and so on. From these trajectory datasets, we separate the tuples corresponding to the following six locations: `Beach', `Mall', `Bank', `Bus Stop', `Train Station', `Airport', and create six different datasets. The basic statistics of this dataset have been listed in Table \ref{table:Dataset}.
\begin{table} [h!]
    \centering
    \caption{Basis Statistics of the Datasets}
    \begin{tabular}{ || c c c c || }
    \hline
    Location Type & \# Rows1 & \# Rows2  & \# Billboard Slots\\ [0.5 ex]
    \hline \hline
    Beach & 76 & 575 & 21888 \\
    Mall & 86 & 1186 & 24768 \\
    Bank & 671 & 2232 &  193248 \\
    Bus Stop & 1056& 4473 &  304128 \\
    Train Station & 288 & 6407 & 82944 \\
    Airport & 313 & 2852 & 90144 \\ [1ex]
    \hline
    \end{tabular}
    \label {table:Dataset}
    \end{table}
\subsection{Experimental Setup}
Now, we describe the experimental setup involved in our experiments. The only parameter whose value needs to be set in this study is the influence probability whose value needs to be fixed. In the dataset (provided by LAMAR \footnote{\url{http://www.lamar.com/InventoryBrowser.}}) we have information about the panel size of the billboards. Among all the billboards, we choose the panel size of the biggest billboard and we fix the influence probability as the ratio between the size of that billboard and the size of the biggest billboard. After fixing the influence probabilities, we compute the influence for four different values of $k$, namely $10$, $15$, $20$, $25$.  
\subsection{Algorithms Compared}
We compare the performance of the proposed solution approach with the following approaches:
\begin{itemize}
\item \textbf{RANDOM}: In this method, for a given value of $k$, we pick any $k$ billboard slots uniformly at random. This is possibly the most simple approach and it will take $\mathcal{O}(k)$ time to execute.
\item \textbf{Top-$k$}: In this method, for every individual billboard slot we compute their influence and sort them in descending order based on their individual influence value. From this sorted list, we pick up Top-$k$ slots. If there are $s$ many tuples in the trajectory dataset, then computing influence for a single billboard slot will take $\mathcal{O}(s)$ time. As there are total $n$ many billboard slots, hence the total time requirement for influence computation is of $\mathcal{O}(n \cdot s)$. Sorting the slots based on the individual influence value will take $\mathcal{O}(n \cdot \log n)$ time. Choosing $k$ of them will take additional $\mathcal{O}(k)$ time. Hence the total time required will be of $\mathcal{O}(n \cdot s + n \cdot \log n + k)$. As $k << n$, this has been reduced to $\mathcal{O}(n \cdot s + n \cdot \log n)$.
\item \textbf{Maximum Coverage (MAX\_COV)}: In this method, we calculate the individual coverage of every billboard slot. We say that a billboard slot $b \in \mathbb{BS}$ covers a tuple $t$ in the trajectory database $\mathcal{D}$ if the recorded time present in the tuple contained in the duration of the billboard slot. After computing the coverage for every billboard slot, we sort them in descending order and pick $k$ slots of them with the highest coverage value. For a single billboard slot, computing its coverage will take $\mathcal{O}(s)$ time. As there are $n$ many billboard slots hence computing coverage for all the slots will take $\mathcal{O}(n \cdot s)$ time. Now, sorting the slots based on the coverage value will take $\mathcal{O}(n \cdot \log n)$ time. Now, choosing $k$ slots from the sorted list will take $\mathcal{O}(k)$ time. As $k << n$, hence the total time requirement of this approach will be $\mathcal{O}(n \cdot s + n \cdot \log n)$.
\item \textbf{Pruned Submodularity Graph+Random (P.S.Graph+RAND)}: In this method, first the pruned submodularity graph-based approach is applied to obtain the pruned set of billboard slots, and from these slots we pick $k$ of them uniformly at random. 
\end{itemize} 
\subsection{Goals of the Experiments} \label{Sec:Goals}
The goals of the experiments are four folded and they are mentioned below:
\begin{itemize}
\item \textbf{The Effectiveness of Preprocessing}: The first experimental goal of our study is to understand how effective the preprocessing step is. As mentioned previously, if we remove the billboard slots having zero influence then due to submodularity property of the influence function, it does not make any difference.
\item \textbf{The Effectiveness of Pruning}: One of our goals is to study what is the percentage of the billboard slots pruned out and it is good if it is sufficiently large. In that case, we can easily apply the incremental greedy approach without much computational burden.  
\item \textbf{The Influence Spread}: Another goal of our study is to make a comparative study of the proposed as well as the baseline methods based on the influence spread.
\item \textbf{Computational Time Requirement}: Finally, our goal is to have a comparison regarding the computational time among the proposed as well as baseline methods. 
\end{itemize}
Next, we proceed to describe the experimental observations along with their explanations.
\subsection{Observations with Explanation}
\paragraph{\textbf{The Effectiveness of Preprocessing}} In most of the instances, we observe that a significant number of billboard slots are removed after the preprocessing phase. When the location type is `Beach', among the $21888$ many billboard slots only $98$ of them are remaining after the preprocessing phase. Percentage wise more than $99 \%$ of billboard slots are removed and these observations are consistent with the other location type as well. Table \ref{table:Preprocessing} lists out the number and percentage wise billboard slots removed after preprocessing for different location types. Here, $\mathbb{BS}^{'}$ denotes the set of billboard slots after preprocessing.
\begin{table} [h!]
    \centering
    \caption{Experimental results related to the Preprocessing Step}
    \begin{tabular}{ || c c c c || }
    \hline
    Location Type & $|\mathbb{BS}|$ & $|\mathbb{BS}^{'}|$  & Percentage\\ [0.5 ex]
    \hline \hline
    Beach & 21888 & 98 & 99.55 \% \\
    Mall & 24768 & 155 & 99.37 \% \\
    Bank & 193248 & 273 &  99.85 \% \\
    Bus Stop & 304128 & 1030 &  99.66 \% \\
    Train Station & 82944 & 804 & 99.03 \% \\
    Airport & 90144 & 557 & 99.38 \% \\ [1ex]
    \hline
    \end{tabular}
    \label {table:Preprocessing}
    \end{table}
\paragraph{\textbf{The Effectiveness of Pruning}} As mentioned previously, pruning is an important step in the proposed methodology. When the location type is `Beach', the number of billboard slots available after the preprocessing step is $98$. If we apply the pruned submodularity graph-based approach the number of slots is even reduced to $58$ and the percentage-wise reduction is more than $40 \%$. So, including both preprocessing and  pruning, the proposed solution approach reduces $99.997 \%$ slots. Even for other location types also our observation is consistent that the proposed preprocessing and pruning technique reduces a significant portion of the billboard slots. Table \ref{table:Pruning} contains experimental results related to the pruning method. Here, $\mathbb{BS}^{''}$ denotes the set of billboard slots after pruning. It is easy to observe that $\mathbb{BS}^{''} \subseteq \mathbb{BS}^{'} \subseteq \mathbb{BS}$.

\begin{table} [h!]
    \centering
    \caption{Experimental results related to the Pruning Step}
    \begin{tabular}{ || c c c c c|| }
    \hline
    Location Type & $|\mathbb{BS}^{'}|$ & $|\mathbb{BS}^{''}|$  & Pruning Percentage & (Preprocessing+Pruning) Percentage\\ [0.5 ex]
    \hline \hline
    Beach & 98 & 58 & 40.81 \% & 99.73 \% \\
    Mall & 155 & 73 & 52.9 \%  & 99.7 \% \\
    Bank & 273 & 95 & 65.2 \% & 99.95 \% \\
    Bus Stop & 1030 & 162 &  84.27 \% & 99.94 \% \\
    Train Station & 804 & 147 & 81.71 \% & 99.82 \% \\
    Airport & 557 & 128 & 77 \% & 99.85 \% \\ [1ex]
    \hline
    \end{tabular}
    \label {table:Pruning}
    \end{table}
    \vspace*{-0.5 cm}
\paragraph{\textbf{The Influence Spread}} Now, we describe our comparative study on influence spread. Figure \ref{Fig:Influence} shows the number of billboard slots vs. influence spread plots for different location types. From the figures we observe that the billboard slots selected by the proposed solution approach leads to the maximum influence compared to the baseline methods. As an example, when the value of $k$ is $25$, the influence due to the billboard slots selected by the pruned submodularity graph along with the incremental greedy approach is around $1169$. However, among the baseline methods the Top-$k$ method leads to maximum amount of influence and that is $655$. So, in this dataset the proposed solution approach leads to approximately $78 \%$ more influence compared to the baseline methods. This observations are consistent with other location types as well with few exceptions when the location type is `Mall' and the budgets are $10$ and $15$. In these two cases we observe that the Top-$k$ Approach is giving more influence. 
\begin{figure*}[!ht]
\centering
\begin{tabular}{cc}
\includegraphics[scale=0.2]{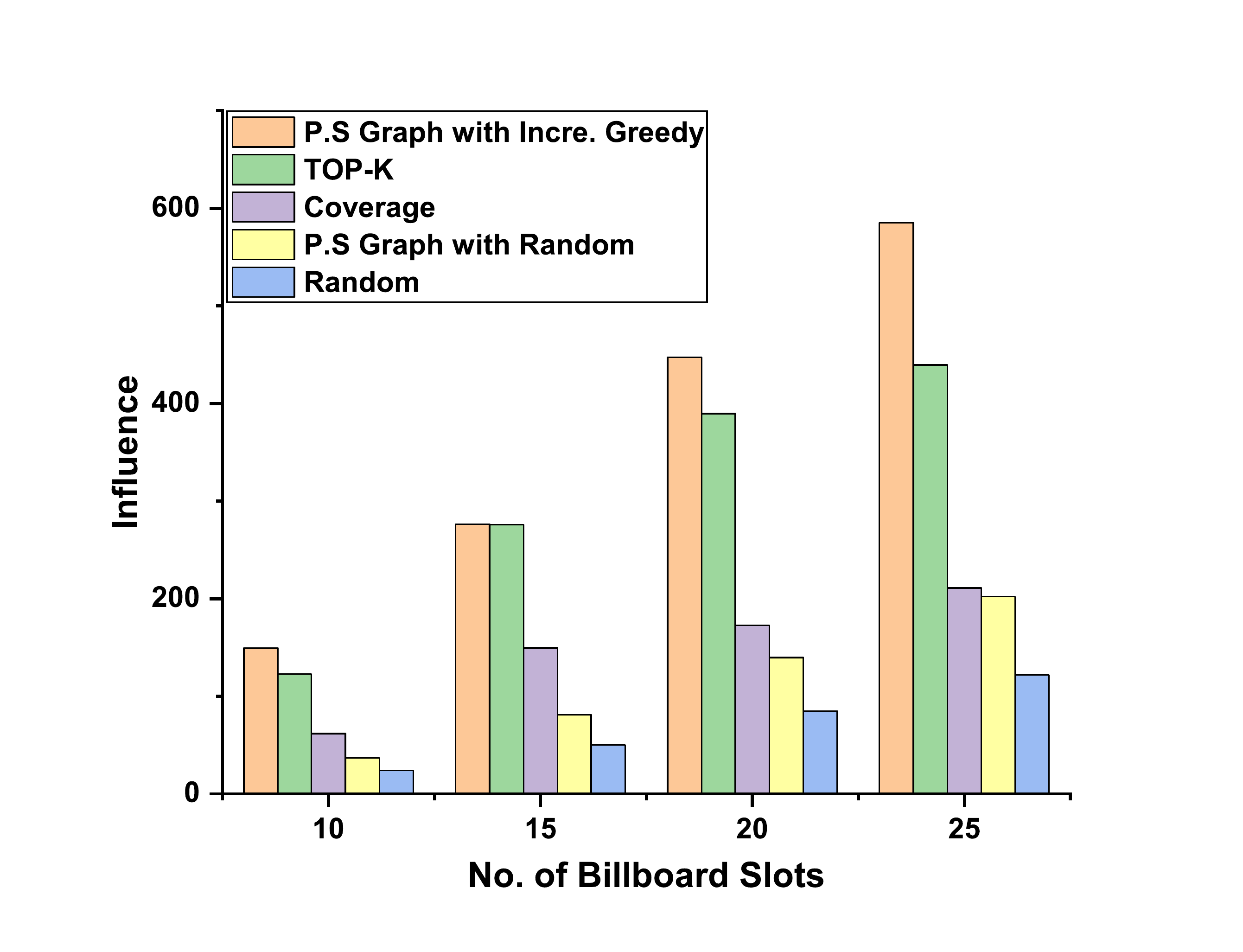} & \includegraphics[scale=0.2]{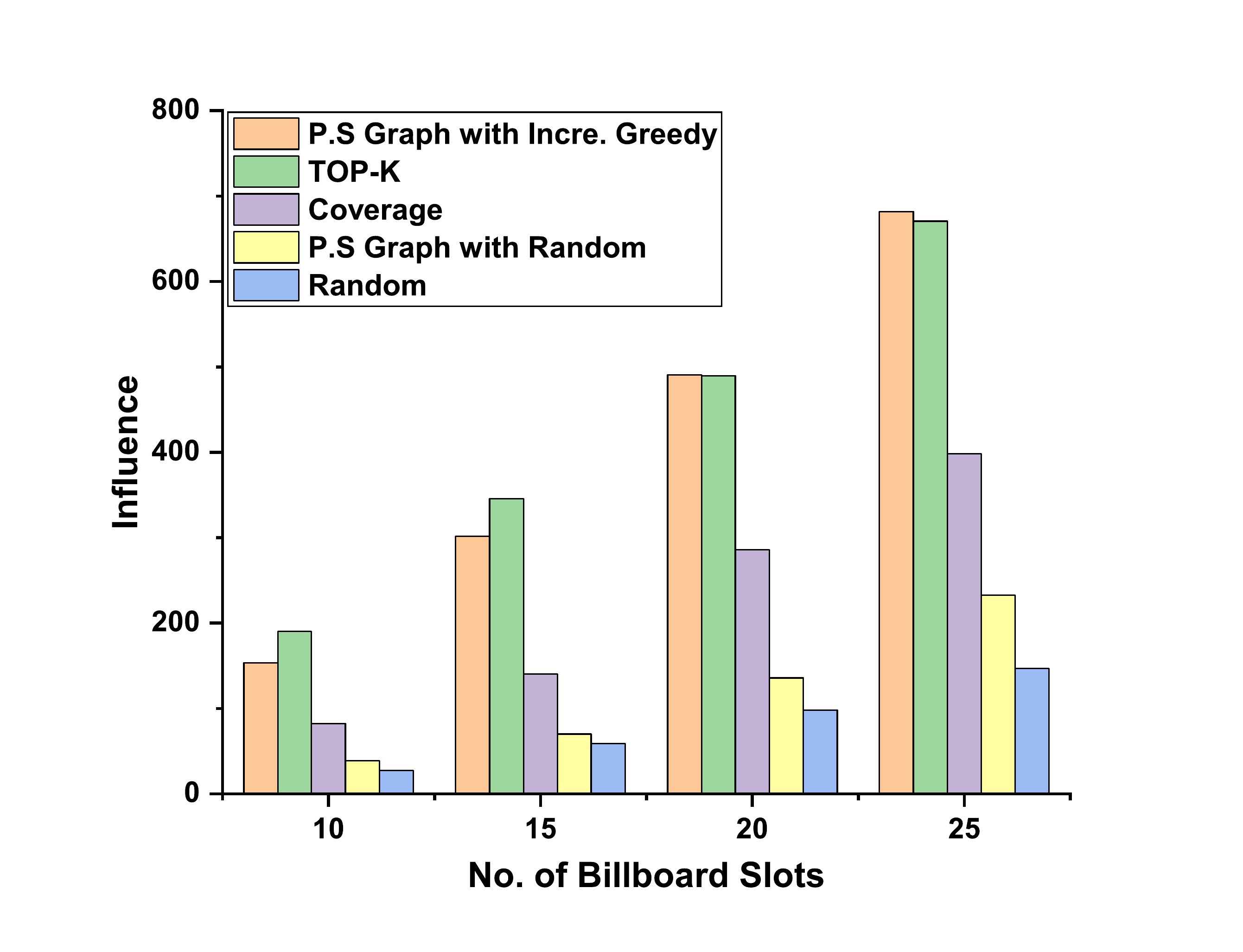}  \\
(a) Location Type=`Beach' & (b) Location Type=`Mall'  \\
\includegraphics[scale=0.2]{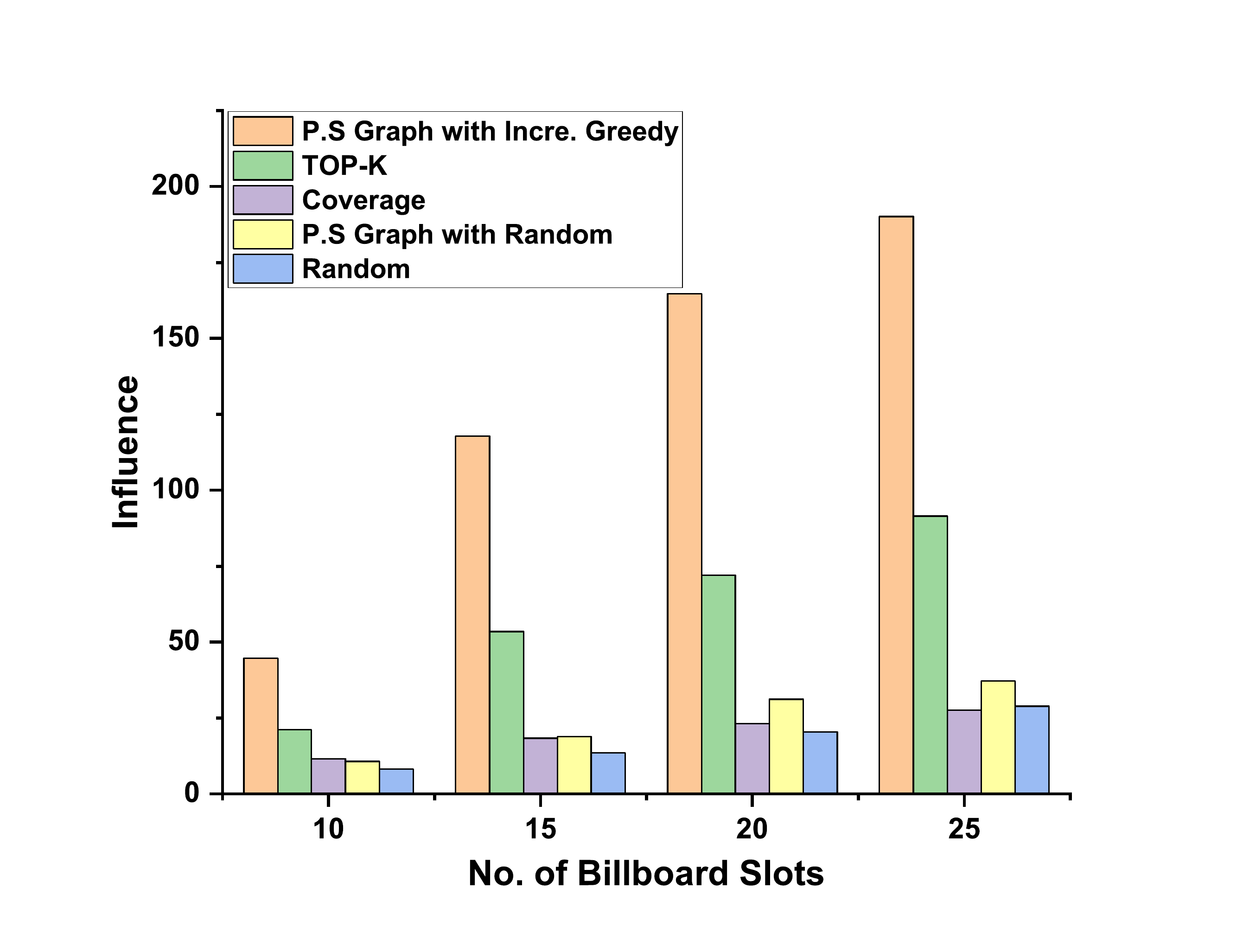} & \includegraphics[scale=0.2]{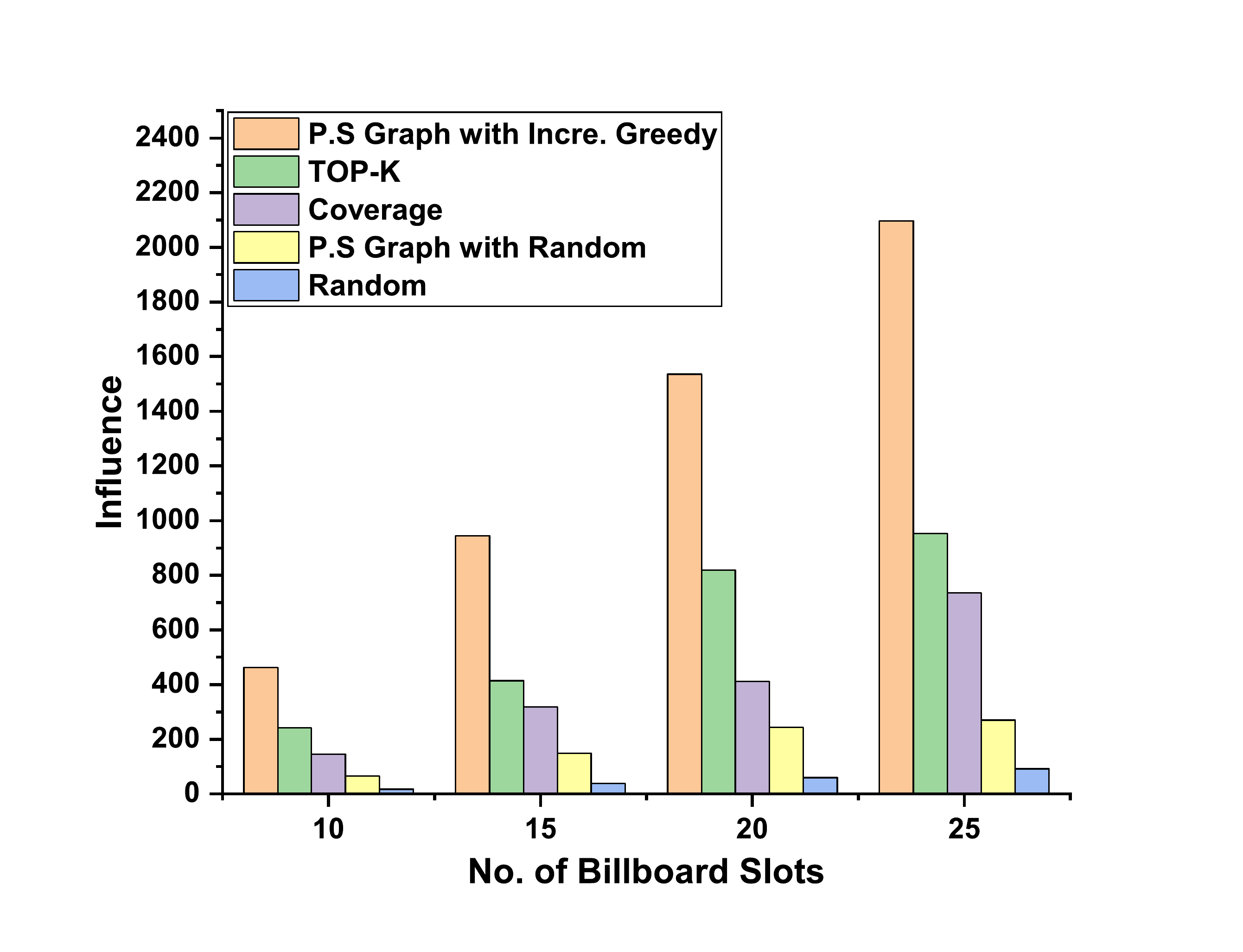}  \\
(c) Location Type=`Bank' & (d) Location Type=`Bus Stop'  \\
\includegraphics[scale=0.2]{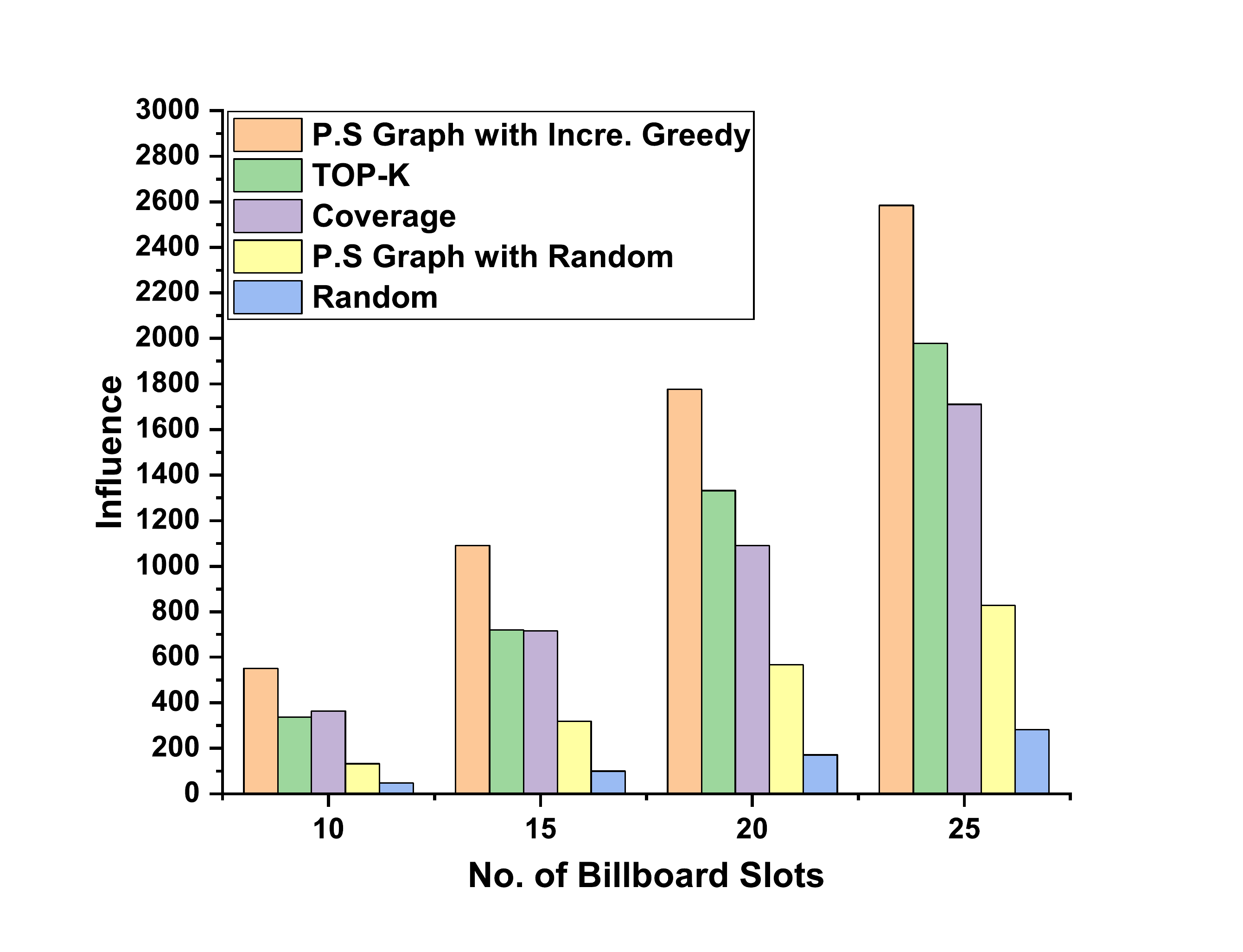} & \includegraphics[scale=0.2]{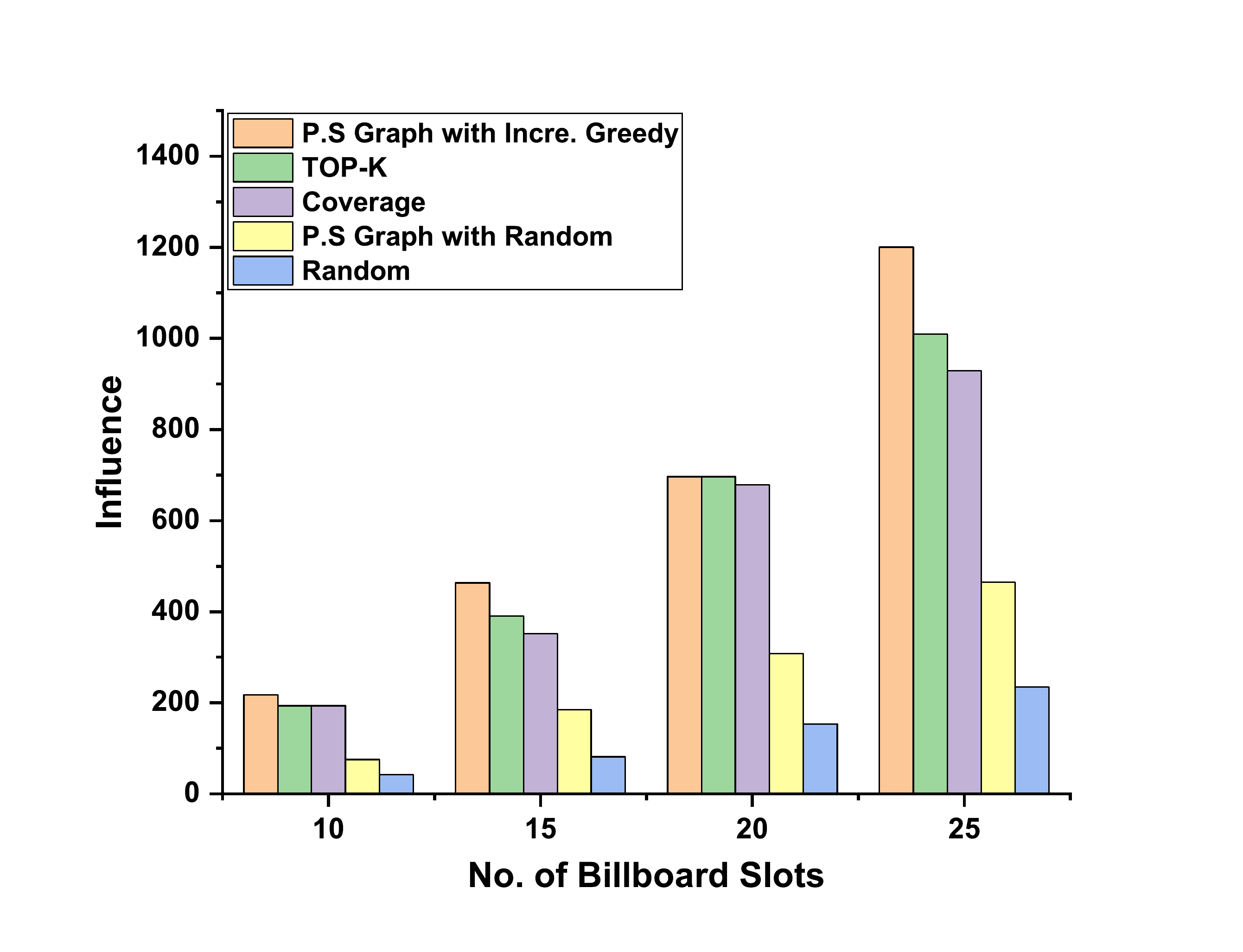}  \\
(e) Location Type=`Train Station' & (f) Location Type=`Airport'  \\
\end{tabular}
\caption{No. of Billboard Slots Vs. Influence plots for all the location types}
\label{Fig:Influence}
\end{figure*}
\paragraph{\textbf{Computational Time Requirement}} From the results, we observe that the computational time requirement of the proposed methodology is reasonable. As an example when the location type is `Beach' the preprocessing and the pruning  time requirement is $84$ Sec  and $110$ Sec, respectively. After that to apply the incremental greedy approach for the final selection of billboard slots requires $7.45$ minutes when the value of $k$ is $10$. With the increment of $k$, the time required for the incremental greedy approach is increasing rapidly. These observations are consistent with the other location types also. Table \ref{table:Computational_Time} contains the experimental results related to the computational time requirement. Due to the space limitations, we only provide the computational time required for $k=10$. In case of the PS Graph+Greedy Approach time requirement is $x+y+z$ means, preprocessing time is $x$ Sec, pruning time is $y$ Secs., and the selection time is $z$ Secs. 
\begin{table} [h!]
    \centering
    \caption{Computational time requirement for all the algorithms for different location types (in Secs.)}
    \begin{tabular}{ || c c c c c c|| }
    \hline
    Location Type & PS Graph+Greedy & Top-$k$  & Coverage & PS Graph+Random & Random \\ [0.5 ex]
    \hline \hline
    Beach & 84+110+447 & 64 & 105 & 190 & 120\\
    Mall & 191+487+1554 & 102 & 209 & 864 & 432\\
    Bank & 5527+4136+14004 & 5544 & 5544 & 10440 & 6372\\
    Bus Stop & 17174+42984+74160 & 17316 & 17352 & 61200 & 10440 \\
    Train Station &  5497+37605+57600 & 5652 & 5616 & 10440 & 7200\\
    Airport & 1625+7201+12600 & 1656 & 1656 & 9360 & 2196 \\ [1ex]
    \hline
    \end{tabular}
    \label {table:Computational_Time}
    \end{table}

\section{Conclusion and Future Research Directions} \label{Sec:CFD}
In this paper, we have studied the problem of influential billboard slot selection and posed this as a discrete optimization problem. We propose an effective solution approach that works in two phases. First, we apply the pruned submodularity graph-based approach to obtain the pruned set of billboard slots, and subsequently, we apply the incremental greedy approach for finally selecting the billboard slots. We analyze the proposed solution approach to understand its time and space requirements. Also, we have analyzed the proposed methodology to obtain the performance guarantee. Our future work on this problem will remain concentrated on developing efficient solution methodologies for this problem.
%

%
%
\bibliographystyle{splncs03}
\bibliography{Paper}

\begin{thebibliography}{10}
\providecommand{\url}[1]{\texttt{#1}}
\providecommand{\urlprefix}{URL }

\bibitem{dai2015personalized}
Dai, J., Yang, B., Guo, C., Ding, Z.: Personalized route recommendation using
  big trajectory data. In: 2015 IEEE 31st international conference on data
  engineering. pp. 543--554. IEEE (2015)

\bibitem{fisher1978analysis}
Fisher, M.L., Nemhauser, G.L., Wolsey, L.A.: An analysis of approximations for
  maximizing submodular set functions—ii. In: Polyhedral combinatorics, pp.
  73--87. Springer (1978)

\bibitem{nemhauser1978analysis}
Nemhauser, G.L., Wolsey, L.A., Fisher, M.L.: An analysis of approximations for
  maximizing submodular set functions—i. Mathematical programming  14(1),
  265--294 (1978)

\bibitem{qu2019profitable}
Qu, B., Yang, W., Cui, G., Wang, X.: Profitable taxi travel route
  recommendation based on big taxi trajectory data. IEEE Transactions on
  Intelligent Transportation Systems  21(2),  653--668 (2019)

\bibitem{vazirani2001approximation}
Vazirani, V.V.: Approximation algorithms, vol.~1. Springer (2001)

\bibitem{wang2022data}
Wang, L., Yu, Z., Guo, B., Yang, D., Ma, L., Liu, Z., Xiong, F.: Data-driven
  targeted advertising recommendation system for outdoor billboard. ACM
  Transactions on Intelligent Systems and Technology (TIST)  13(2),  1--23
  (2022)

\bibitem{xue2019rapid}
Xue, Q., Wang, K., Lu, J.J., Liu, Y.: Rapid driving style recognition in
  car-following using machine learning and vehicle trajectory data. Journal of
  advanced transportation  2019 (2019)

\bibitem{zahradka2021price}
Zahr{\'a}dka, J., Machov{\'a}, V., Ku{\v{c}}era, J.: What is the price of
  outdoor advertising: A case study of the czech republic? Ad Alta: Journal of
  Interdisciplinary Research  (2021)

\bibitem{zhang2020towards}
Zhang, P., Bao, Z., Li, Y., Li, G., Zhang, Y., Peng, Z.: Towards an optimal
  outdoor advertising placement: When a budget constraint meets moving
  trajectories. ACM Transactions on Knowledge Discovery from Data (TKDD)
  14(5),  1--32 (2020)

\bibitem{zhang2021minimizing}
Zhang, Y., Li, Y., Bao, Z., Zheng, B., Jagadish, H.: Minimizing the regret of
  an influence provider. In: Proceedings of the 2021 International Conference
  on Management of Data. pp. 2115--2127 (2021)

\bibitem{zhou2017scaling}
Zhou, T., Ouyang, H., Bilmes, J., Chang, Y., Guestrin, C.: Scaling submodular
  maximization via pruned submodularity graphs. In: Artificial Intelligence and
  Statistics. pp. 316--324. PMLR (2017)

\end{thebibliography}
\end{document}